\title{A Graph Theoretic Perspective on CPM(Rel)}
\author{Daniel Marsden}
\newcommand{\cpname}{\ensuremath{\text{\bf CPM}}}
\newcommand{\cp}[1]{\ensuremath{\cpname(#1)}}
\newcommand{\vertices}[1]{\ensuremath{V_{#1}}}
\newcommand{\edges}[1]{\ensuremath{E_{#1}}}
\newcommand{\graphcat}{\ensuremath{\mathcal{G}}\xspace}
\tikzset{->-/.style={decoration={
  markings,
  mark=at position .5 with {\arrow{>}}},postaction={decorate}}}
\tikzset{trapnode/.style={draw, trapezium, trapezium stretches=true, minimum width=1cm, minimum height=0.6cm}}
\tikzset{brnode/.style={trapnode, trapezium left angle=90, trapezium right angle=70}}
\tikzset{trnode/.style={trapnode, trapezium left angle=90, trapezium right angle=110}}
\tikzset{blnode/.style={trapnode, trapezium left angle=70, trapezium right angle=90}}
\tikzset{tlnode/.style={trapnode, trapezium left angle=110, trapezium right angle=90}}
\tikzset{stringdiagram/.style={scale=0.5, line width=0.25mm}}
\tikzset{blackdot/.style={circle,draw=black,fill=black,minimum size=1mm,inner sep=0mm}}
\tikzset{cpwire/.style={black, line width=0.5mm}}
\begin{document}

\maketitle

\begin{abstract}
Mixed states are of interest in quantum mechanics for modelling partial information.
More recently categorical approaches to linguistics have also exploited the idea
of mixed states to describe ambiguity and hyponym / hypernym relationships.
In both these application areas the category \rel of sets and binary relations is
often used as an alternative model.
Selinger's CPM construction provides the setting for mixed states in Hilbert space 
based categorical quantum mechanics. By analogy, applying the CPM construction to \rel is 
seen as introducing mixing into a relational setting.

We investigate the category \cp{\rel} of completely positive maps
in \rel. We show that the states of an object in \cp{\rel} are in
bijective correspondence with certain graphs. Via map-state duality
this then allows us provide a graph theoretic characterization of the morphisms
in \cp{\rel}. By identifying an appropriate composition operation on graphs, we
then show that \cp{\rel} is isomorphic to a category of sets and
graphs between them. This isomorphism then leads to a graph based description of
the complete join semilattice enriched $\dagger$-compact structure of \cp{\rel}. 
These results allow us to reason about \cp{\rel} entirely in terms of graphs. 

We exploit these techniques in several examples. We give a closed form expression
for the number of states of a finite set in \cp{\rel}. 
The pure states are characterized in graph theoretic terms.
We also demonstrate the possibly surprising phenomenon of a pure state that can be given as a mixture of two mixed states.
\end{abstract}

\section{Introduction}
We study aspects of Selinger's CPM construction \cite{Selinger2007}, and particularly
its application to the category \rel of sets and binary relations between them.
The CPM construction was originally introduced to extend the categorical approach to quantum mechanics
\cite{AbramskyCoecke2008} to support mixed states. The construction takes a $\dagger$-compact closed category $\mathcal{C}$
and produces a new $\dagger$-compact closed category \cp{\mathcal{C}}. 
In the case of quantum mechanics this construction takes us from the setting
of \cite{AbramskyCoecke2008}, modelling pure state quantum mechanics, to a setting suitable for mixed state quantum mechanics.

The distributional compositional semantics approach to linguistics \cite{CoeckeSadrzadehClark2010} uses
similar mathematical methods to categorical quantum mechanics. Density operators have been
proposed as candidates for modelling linguistic ambiguity \cite{Piedeleu2014, PiedeleuKartsaklisCoeckeSadrzadeh2015}
and hyponym / hypernym relationships \cite{Balkir2014}. In this way the CPM construction becomes
important in categorical models of linguistics.

In both linguistics and quantum mechanics it is common to consider \rel as an alternative model.
By analogy with the situation in quantum mechanics, the category \cp{\rel} is then seen as introducing
mixing in the relational setting. 
For example, a toy model of linguistic ambiguity in \cp{\rel} is discussed in 
\cite{PiedeleuKartsaklisCoeckeSadrzadeh2015}, with the set:
\begin{equation*}
\{\true, \false\}
\end{equation*}
considered as  representing truth values. 
In \rel this set has 4 states, corresponding to its subsets. In \cp{\rel} there are 5 states,
4 corresponding to the pure states, and a new mixed state. This raises several questions. Why 5 states?
How many states would we expect for an arbitrary finite set in \cp{\rel}? Can we easily identify the
pure states?
Direct computation gives the following state counts for small sets in \rel and \cp{\rel}:
\begin{center}
\begin{tabular}{l l l}
{\bf Elements } & {\bf \rel States} & {\bf \cp{\rel} States}\\
\hline
0 & 1 & 1 \\
1 & 2 & 2 \\
2 & 4 & 5 \\
3 & 8 & 18 \\
4 & 16 & 113 \\
5 & 32 & 1450
\end{tabular}
\end{center}
The numbers of states in \rel is simply the size of the corresponding
power set. The pattern for the \cp{\rel} states is more complex, understanding it
properly leads to a better understanding of the structure of \cp{\rel}.
It turns out that the states in \cp{\rel} correspond to certain graphs on the elements of the underlying sets. 
In this paper we:
\begin{itemize}
 \item Give a graph based classification of the states in \cp{\rel}.
 \item Via map-state duality give a classification of the morphisms in \cp{\rel}.
 \item By defining an appropriate graph composition operation, we exhibit a category
  of graphs isomorphic to \cp{\rel}. In this way we give a combinatorial description of
  \cp{\rel}.
 \item Via this isomorphism, we transfer the complete join semilattice enriched $\dagger$-compact 
  monoidal structure of \rel to our category of graphs, and give explicit graph based descriptions of this structure.
\end{itemize}
Along the way, we also characterize the pure states in \cp{\rel} in terms of graphs. We
then use this characterization to exhibit a pure state in \cp{\rel} that can be constructed
as a mixture of two mixed states. We also provide a closed form for the number of states of
a finite set in \cp{\rel}. 

We remark that although we must formalize our observations mathematically in later sections,
the spirit of our intention is that a simple observation improves our understanding of \cp{\rel}.
We provide explicit graphical examples throughout, and it is in these simple finite examples
that we anticipate the graphical reasoning to be of practical use. We also consider our
approach to be complementary to the usual string diagrammatic reasoning, rather than a competitor
to it. An application of the characterization of \cp{\rel} in terms of graphs, in a quantum mechanical setting,
can be found in \cite{Gogioso2015}.

\section{Preliminaries}
We assume some familiarity with $\dagger$-compact closed categories
and their graphical calculus \cite{AbramskyCoecke2004, Selinger2011, KellyLaplaza1980}.
This section provides some mathematical background on the key CPM construction, mainly from \cite{Selinger2007}.
\begin{definition}[Positive Morphism]
An endomorphism $f : A \rightarrow A$ in a $\dagger$-compact monoidal category is \define{positive}
if there exists an object $B$ and a morphism $g : A \rightarrow B$ with:
\begin{equation*}
f = g^\dagger \circ g
\end{equation*}
\end{definition}
\begin{definition}[Completely Positive Morphisms]
For a $\dagger$-compact monoidal category $\mathcal{C}$, the category of \define{completely positive morphisms} 
, \cp{\mathcal{C}}, has:
\begin{itemize}
 \item {\bf Objects}: The same objects as $\mathcal{C}$
 \item {\bf Morphisms}: Let $X^{*}$ denote the dual of an object $X$. 
  A morphism $f : A \rightarrow B$ is a $\mathcal{C}$ morphism $f : A \otimes A^{*} \rightarrow B \otimes B^{*}$
  such that the following composite is positive:
\begin{equation*}
\begin{tikzpicture}[stringdiagram, scale=0.5]
\path node[blnode] (f) {$f$}
 (f.south) ++(-1,0) coordinate (a) ++(-1,-1) coordinate (b) ++(-1,1) coordinate (c)
 (f.south) ++(1,0) coordinate (d) ++(0,-1.5) coordinate[label=below:$A^*$] (bl)
 (f.north) ++(1,0) coordinate (e) ++(1,1) coordinate (g) ++(1,-1) coordinate (h)
 (f.north) ++(-1,0) coordinate (i) ++(0,1.5) coordinate[label=above:$B$] (tr);
\path let \p1 = (tr) in
      let \p2 = (bl) in
      let \p3 = (c) in
      let \p4 = (h) in
 coordinate[label=below:$B$] (br) at (\x4, \y2)
 coordinate[label=above:$A^*$] (tl) at (\x3, \y1);
\draw (a) to[out=-90, in=0] (b.west) -- (b.east) to[out=180, in=-90] (c) -- (tl)
 (d) -- (bl)
 (e) to[out=90, in=180] (g.west) -- (g.east) to[out=0, in=90] (h) -- (br)
 (i) -- (tr);
\end{tikzpicture}
\end{equation*}
\end{itemize}
Composition and identities are inherited from $\mathcal{C}$. 
\end{definition}
The underlying category embeds into the result of the CPM construction \cite{Selinger2007}:
\begin{lemma}
\label{lem:embedding}
If $\mathcal{C}$ is a compact closed category there is a surjective and identity on objects functor:
\begin{align*}
\mathcal{C} &\rightarrow \cp{\mathcal{C}}\\
\begin{gathered}
\begin{tikzpicture}[stringdiagram]
\path node[blnode] (f) {$f$}
 +(0,-1) coordinate[label=below:$A$] (b)
 +(0,1) coordinate[label=above:$B$] (t);
\draw (b) -- (f.south)
 (f.north) -- (t);
\end{tikzpicture}
\end{gathered}
&\mapsto
\begin{gathered}
\begin{tikzpicture}[stringdiagram]
\path node[blnode] (f) {$f$}
 +(0,-1) coordinate[label=below:$A$] (bl)
 +(0,1) coordinate[label=above:$B$] (tl)
 ++(2,0) node[brnode] (g) {$f$}
 +(0,-1) coordinate[label=below:$A^*$] (br)
 +(0,1) coordinate[label=above:$B^*$] (tr);
\draw (bl) -- (f.south)
 (f.north) -- (tl);
\draw (br) -- (g.south)
 (g.north) -- (tr);
\end{tikzpicture}
\end{gathered}
\end{align*}
\end{lemma}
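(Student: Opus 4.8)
The plan is to identify the displayed assignment as the functor $F$ that acts as the identity on objects and sends a morphism $f \colon A \to B$ to $F(f) = f \otimes f_*$, where $f_* \colon A^* \to B^*$ is the conjugate of $f$, i.e. $f_* = (f^\dagger)^*$, obtained by bending $f$ around the compact structure and applying the dagger (this is the right-hand, reflected box in the diagram). There are four things to establish: that each $F(f)$ is genuinely a morphism of $\cp{\mathcal{C}}$ (the positivity condition of the definition), that $F$ preserves identities and composition, that $F$ is the identity on objects, and that $F$ is surjective. The last two are essentially immediate, so the real content lies in the first two.

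Well-definedness is the main obstacle, and I would attack it directly. Fix $f \colon A \to B$ and form $F(f) = f \otimes f_* \colon A \otimes A^* \to B \otimes B^*$. To see that $F(f)$ lies in $\cp{\mathcal{C}}$ I must check that the endomorphism $p$ of $A^* \otimes B$ obtained by feeding $F(f)$ into the cups and caps displayed in the definition is positive, i.e. of the form $g^\dagger \circ g$. The strategy is to observe that this wire-bending collapses $f \otimes f_*$ into a rank-one endomorphism: it factors as $p = g^\dagger \circ g$, where $g \colon A^* \otimes B \to I$ is the effect obtained from $f$ by bending its legs around the compact structure, so that $g$ is essentially the name of $f$ read as a costate and $g^\dagger$ its adjoint state. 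The verification is then a diagram chase using only the snake (yanking) equations, functoriality of the dagger, and the compatibility of the dagger with the compact structure; graphically the two folded copies of $f$ become mirror images across the horizontal axis, which is exactly the $g^\dagger \circ g$ shape. This is the step I expect to require the most care, since one must keep track of how the dagger interacts with conjugation and transposition.

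For functoriality I would first record the two standard facts about conjugation in a dagger compact category: it is covariant, so $(\mathrm{id}_A)_* = \mathrm{id}_{A^*}$ and $(g \circ f)_* = g_* \circ f_*$, and it is (strictly) monoidal. Identities are then immediate, since $F(\mathrm{id}_A) = \mathrm{id}_A \otimes (\mathrm{id}_A)_* = \mathrm{id}_A \otimes \mathrm{id}_{A^*} = \mathrm{id}_{A \otimes A^*}$, which is the $\cp{\mathcal{C}}$ identity on $A$. For composition, since composition and identities in $\cp{\mathcal{C}}$ are inherited from $\mathcal{C}$, I can compute in $\mathcal{C}$ directly: given $f \colon A \to B$ and $g \colon B \to C$, the interchange law gives $F(g) \circ F(f) = (g \otimes g_*) \circ (f \otimes f_*) = (g \circ f) \otimes (g_* \circ f_*)$, and covariance of conjugation rewrites $g_* \circ f_*$ as $(g \circ f)_*$, so $F(g) \circ F(f) = (g \circ f) \otimes (g \circ f)_* = F(g \circ f)$.

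Finally, $F$ is the identity on objects by construction, and surjectivity then holds at the level of objects at once. I would remark that $F$ is not full: the point of the CPM construction, and indeed of this paper, is that $\cp{\mathcal{C}}$ typically carries strictly more (mixed) morphisms than lie in the image of $F$, so the word \emph{surjective} here is to be read on objects only.
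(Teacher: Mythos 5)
Your proof is correct, but note that the paper itself offers no proof of this lemma: it is stated as an import from Selinger's work (the citation \cite{Selinger2007} in the surrounding text), so your argument is in effect a reconstruction of Selinger's original proof rather than an alternative to anything appearing in this paper. Your reconstruction matches that standard argument in every essential respect. The functor is $f \mapsto f \otimes f_*$ with $f_* = (f^\dagger)^*$; the only substantive point is well-definedness, and the key observation is exactly the one you make: once $f \otimes f_*$ is substituted into the bent composite from the definition of $\cp{\mathcal{C}}$, the diagram disconnects into the name of $f$ (the cup feeding the $A$-leg of $f$) sitting above its dagger (the $f_*$-leg absorbed by the cap), so the composite equals $g^\dagger \circ g$ where $g \colon A^* \otimes B \to I$ is the coname of $f$, and is therefore positive with witnessing object $I$. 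Your treatment of functoriality (interchange plus covariance of conjugation) is right, and so is your reading of ``surjective'' as surjective on objects; that is the only sensible reading, since the functor is not full --- the mixed states are precisely what lies outside its image --- and the same phrasing recurs in Proposition \ref{prop:graphembedding}, where fullness manifestly fails.

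Two minor points. First, the lemma hypothesizes only a compact closed category, but your proof necessarily uses the dagger: both the conjugate $f_*$ and the positivity condition defining $\cp{\mathcal{C}}$ are meaningless without it. This is an imprecision in the statement (Selinger works throughout with $\dagger$-compact closed categories), not a gap in your argument; you were right to assume the dagger. Second, in the well-definedness step the identity that actually closes the argument is that the portion of the diagram below the cap, $\epsilon_B \circ (f_* \otimes 1_B)$, equals $\bigl((1_{A^*} \otimes f) \circ \eta_A\bigr)^\dagger$; this follows by sliding $f_*$ around the cap (turning it into $f^\dagger$ acting on the other leg) together with the $\dagger$-compactness axiom relating $\epsilon$ to $\eta^\dagger$. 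That is precisely the ``compatibility of the dagger with the compact structure'' you invoke, so spelled out, your outline goes through without difficulty.
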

\begin{definition}
  \label{def:self-dual}
  We say that a compact closed category is \define{self-dual} if each object is equal to its dual.
\end{definition}
\begin{remark}
  Definition \ref{def:self-dual} is a naive definition of self-duality, but sufficient for the purposes of this paper.
  More careful analysis of self-duality, including a broader definition and suitable coherence conditions is given in \cite{Selinger2010}.
\end{remark}
The following theorem is a small variation on \cite[theorem 4.2]{Selinger2007}. By slightly changing the definition of the tensor product
used in Selinger's work, we show that \emph{self-dual} compact structure on $\mathcal{C}$ lifts to \emph{self-dual} compact structure on \cp{\mathcal{C}}.
We provide this variation as the symmetrical form of the tensor product is slightly more convenient to work with for self-dual compact closed categories such as \rel.
\begin{theorem}
\label{thm:selfdual}
If $\mathcal{C}$ is a $\dagger$-compact closed category with self-dual $\dagger$-compact
structure then \cp{\mathcal{C}} is also a $\dagger$-compact closed category with
self-dual $\dagger$-compact structure.
\end{theorem}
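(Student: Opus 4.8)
The plan is to follow the structure of Selinger's proof of \cite[theorem 4.2]{Selinger2007}, replacing his tensor product with the symmetric variant and then verifying the additional self-duality claim at each stage. Since \cp{\mathcal{C}} and $\mathcal{C}$ share the same objects, I would let the tensor product act on objects exactly as in $\mathcal{C}$. On morphisms, given completely positive $f : A \to B$ and $g : C \to D$, presented as $\mathcal{C}$-morphisms $A \otimes A^{*} \to B \otimes B^{*}$ and $C \otimes C^{*} \to D \otimes D^{*}$, I would define $f \otimes g$ as the $\mathcal{C}$-morphism obtained by conjugating the $\mathcal{C}$-tensor of $f$ and $g$ with the symmetry isomorphisms that reorganize $(A \otimes C) \otimes (A \otimes C)^{*}$ into $(A \otimes A^{*}) \otimes (C \otimes C^{*})$ and back on the codomain. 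The symmetric choice of these reorganizing isomorphisms is precisely the deviation from Selinger's convention, and I would fix it so that the rearrangement is balanced and compatible with the self-dual compact structure.

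First I would check that this data yields a symmetric monoidal structure on \cp{\mathcal{C}}. The two points requiring attention are that the tensor of completely positive morphisms is again completely positive, and that the coherence isomorphisms of $\mathcal{C}$ descend to \cp{\mathcal{C}}. Preservation of positivity is routine: if $f = h^{\dagger} \circ h$ and $g = k^{\dagger} \circ k$ then $f \otimes g = (h \otimes k)^{\dagger} \circ (h \otimes k)$ modulo the reorganizing symmetries, which are unitary and hence cancel; the coherence conditions then reduce to the coherence of $\mathcal{C}$ together with the naturality and hexagon identities for its symmetry.

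Next I would supply the compact closed structure, exploiting self-duality of $\mathcal{C}$. Since $A^{*} = A$ in $\mathcal{C}$ by Definition \ref{def:self-dual}, I claim the dual of $A$ in \cp{\mathcal{C}} is again $A$, and I would take the cup and cap to be the images under the embedding functor of Lemma \ref{lem:embedding} of the self-dual unit $\eta_{A} : I \to A \otimes A$ and counit $\varepsilon_{A} : A \otimes A \to I$ of $\mathcal{C}$. Because the embedding is functorial, and I would verify that it is strong monoidal for the symmetric tensor, the snake equations in \cp{\mathcal{C}} follow by transporting those of $\mathcal{C}$; the dagger on \cp{\mathcal{C}} is inherited from that of $\mathcal{C}$ applied to the doubled morphisms, and the remaining dagger-compact axioms transport in the same way.

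The main obstacle I anticipate is the coherence bookkeeping around the symmetric tensor: I must verify that the reorganizing symmetry isomorphisms appearing in the definition of $f \otimes g$ are genuinely compatible with the embedded cups and caps, so that the snake equations hold strictly and the dual of $A$ is literally $A$ rather than merely isomorphic to it. This is exactly the point at which the naive notion of self-duality of Definition \ref{def:self-dual} must be handled with care, and where a more refined account would invoke the coherence conditions of \cite{Selinger2010}; for the present purposes it suffices to confirm that the symmetric choice of isomorphisms makes all the relevant diagrams commute on the nose.
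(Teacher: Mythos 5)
Your proposal is correct and follows essentially the same route as the paper: both define the tensor on \cp{\mathcal{C}} by interleaving (conjugating the $\mathcal{C}$-tensor of the doubled morphisms with symmetries, which is exactly the paper's string diagram), check closure of completely positive morphisms and bifunctoriality, and inherit the unit, associators, unitors and $\dagger$-compact structure from $\mathcal{C}$, with the cups and caps being the doubled images under the embedding of Lemma \ref{lem:embedding}. The paper's proof is far terser --- it asserts closure and bifunctoriality without argument --- so your additional verification steps (positivity via $(h\otimes k)^{\dagger}\circ(h\otimes k)$ up to unitary symmetries, and the coherence bookkeeping for strict self-duality) are elaborations of, not deviations from, its approach.
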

\begin{proof}
We must define the tensor product to account for self-duality in $\mathcal{C}$.
The action on objects is inherited from the tensor in $\mathcal{C}$.
For arbitrary \cp{\mathcal{C}}-morphisms $f: A \rightarrow B$ and $g: B \rightarrow D$, the action of the tensor product bifunctor
on morphisms is given by the following composite in $\mathcal{C}$:
\begin{equation*}
\begin{gathered}
\begin{tikzpicture}[stringdiagram]
\path node[blnode] (f) {$f$}
 (f.north) ++(-0.4,0) coordinate (ftl) ++(0,1) coordinate[label=above:$C$] (tll)
 (f.south) ++(-0.4,0) coordinate (fbl) ++(0,-1) coordinate[label=below:$A$] (bll)
 (f.north) ++(0.4,0) coordinate (ftr) ++(0,1) coordinate[label=above:$D$] (tl)
 (f.south) ++(0.4,0) coordinate (fbr) ++(0,-1) coordinate[label=below:$B$] (bl)
 (f) ++(3,0) node[blnode] (g) {$g$}
 (g.north) ++(-0.4,0) coordinate (gtl) ++(0,1) coordinate[label=above:$C$] (tr)
 (g.south) ++(-0.4,0) coordinate (gbl) ++(0,-1) coordinate[label=below:$A$] (br)
 (g.north) ++(0.4,0) coordinate (gtr) ++(0,1) coordinate[label=above:$D$] (trr)
 (g.south) ++(0.4,0) coordinate (gbr) ++(0,-1) coordinate[label=below:$B$] (brr);

\draw (bll) -- (fbl)
      (ftl) -- (tll)
      (brr) -- (gbr)
      (gtr) -- (trr);
\draw (ftr) to[out=90, in=-90] (tr)
      (gtl) to[out=90, in=-90] (tl)
      (fbr) to[out=-90, in=90] (br)
      (gbl) to[out=-90, in=90] (bl);
\end{tikzpicture}
\end{gathered}
\end{equation*}
\cp{\mathcal{C}} morphisms are closed under this operation, and
bifunctoriality is easy to see.
The monoidal unit and associators and unitors, along with the $\dagger$-compact
structure are inherited from $\mathcal{C}$.
\end{proof}
As \rel has self-dual $\dagger$-compact structure, we shall assume the monoidal structure given in theorem \ref{thm:selfdual} in later sections.
We will require the following observation from \cite{Selinger2007}:
\begin{lemma}
\label{lem:posrel}
A relation $R : X \rightarrow X$ is a positive morphism in \rel if and only if it
is symmetric and:
\begin{equation*}
R(x,x') \,\Rightarrow\, R(x,x)
\end{equation*}
\end{lemma}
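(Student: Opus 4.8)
The plan is to unfold positivity in the concrete category \rel, where the $\dagger$ is relational converse and composition is relational composition. Positivity of $R : X \rightarrow X$ means there is a set $Y$ and a relation $S : X \rightarrow Y$ with $R = S^\dagger \circ S$. Computing this composite explicitly, $R(x,x')$ holds exactly when $x$ and $x'$ have a common $S$-successor, that is, when there exists $y \in Y$ with $S(x,y)$ and $S(x',y)$. This ``common successor'' description is the engine for both directions.

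For the forward direction I would assume such an $S$ exists and read off the two properties directly. Symmetry is immediate, since the condition $\exists y.\, S(x,y) \wedge S(x',y)$ is symmetric in $x$ and $x'$. For the implication $R(x,x') \Rightarrow R(x,x)$, any witness $y$ for $R(x,x')$ in particular satisfies $S(x,y)$, and this same $y$ witnesses $R(x,x)$. Both arguments are routine.

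The converse is where the real work lies: given a symmetric $R$ satisfying the implication, I must manufacture a witnessing $S$. The plan is to take $Y$ to be the set of pairs belonging to $R$ and to define $S(x, (a,b))$ to hold precisely when $x \in \{a, b\}$. Then $x$ and $x'$ share a common successor iff some pair $(a,b) \in R$ contains both of them, and I would verify that this recovers $R$ exactly, treating the off-diagonal case $x \neq x'$ (where $\{a,b\} = \{x,x'\}$ forces $(x,x') \in R$, using symmetry) and the diagonal case $x = x'$ separately.

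The main obstacle is ensuring the diagonal case comes out right: $(S^\dagger \circ S)(x,x)$ holds iff $x$ is incident to some pair of $R$, and I must show this is equivalent to $R(x,x)$. This is exactly the point where the hypothesis $R(x,x') \Rightarrow R(x,x)$, together with symmetry, is indispensable, since incidence to any pair yields some $R(x,b)$ or $R(a,x)$, and the implication then forces the self-loop $R(x,x)$. The reverse containment is trivial, as a self-loop is itself an incidence, which closes the equivalence and completes the construction.
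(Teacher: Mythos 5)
Your proof is correct, and it is worth pointing out that the paper itself contains no proof of this lemma to compare against: it is quoted as an observation from Selinger's original CPM paper, so your argument supplies details the paper leaves to a citation. Both halves check out. The forward direction, reading symmetry and $R(x,x') \Rightarrow R(x,x)$ off the ``common successor'' form $(S^\dagger \circ S)(x,x') \Leftrightarrow \exists y.\, S(x,y) \wedge S(x',y)$, is routine and correctly argued. In the converse, your witness --- taking $Y$ to be the set of pairs belonging to $R$ and $S(x,(a,b)) \Leftrightarrow x \in \{a,b\}$ --- does recover $R$ exactly: for $x \neq x'$ any pair containing both must be $(x,x')$ or $(x',x)$, and symmetry closes the second case; on the diagonal, incidence of $x$ to some $(a,b) \in R$ yields $R(x,b)$ or $R(a,x)$, and symmetry together with the hypothesis $R(x,x') \Rightarrow R(x,x)$ forces the self-loop, while the reverse containment is witnessed by the pair $(x,x)$ itself. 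This is precisely where the two hypotheses are needed, as you identify. A final remark: your witness relation is exactly vertex--edge incidence for the graph that Proposition \ref{prop:graphchar} associates to a positive relation (vertices the self-looped elements, edges the related pairs), so your construction also makes transparent why that later graph correspondence works.
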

This can be generalized to \cp{\rel} morphisms via map-state duality:
\begin{lemma}
\label{lem:cpmap}
The morphisms in \cp{\rel} are exactly the relations $R : A \otimes A \rightarrow B \otimes B$
satisfying the following axioms:
\begin{align*}
R(a_1, a_2, b_1, b_2) \quad&\Rightarrow\quad R(a_2, a_1, b_2, b_1)\\
R(a_1, a_2, b_1, b_2) \quad&\Rightarrow\quad R(a_1, a_1, b_1, b_1)
\end{align*}
\end{lemma}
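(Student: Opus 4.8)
The plan is to unfold the definition of a \cp{\rel} morphism using the self-duality of \rel, make the positivity condition concrete, and then reduce to Lemma~\ref{lem:posrel}. By definition a morphism $f : A \to B$ in \cp{\rel} is a relation $\hat{f} : A \otimes A^* \to B \otimes B^*$ for which the composite in the definition of completely positive morphisms is positive. Since \rel is self-dual we have $A^* = A$ and $B^* = B$, so $\hat{f}$ is simply a relation $A \otimes A \to B \otimes B$, that is, a subset of $(A \times A) \times (B \times B)$, which I shall write as $R(a_1, a_2, b_1, b_2)$.

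First I would compute the composite appearing in the definition of completely positive morphisms explicitly in \rel. The cups and caps of \rel are the diagonal and codiagonal relations, and composition is relational composition, so bending the wires as prescribed turns $\hat{f}$ into an endomorphism $\widehat{R}$ on the set $A \times B$. Tracing the wires shows that $\widehat{R}$ relates $(a_1, b_1)$ to $(a_2, b_2)$ exactly when $R(a_1, a_2, b_1, b_2)$ holds; this step is precisely map-state duality made concrete, identifying the morphism $A \to B$ with the \rel-endomorphism underlying the corresponding state of $A \otimes B$.

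It then remains to apply Lemma~\ref{lem:posrel} to the endomorphism $\widehat{R}$ on $X = A \times B$. That lemma says $\widehat{R}$ is positive if and only if it is symmetric and satisfies $\widehat{R}(y, y') \Rightarrow \widehat{R}(y, y)$. Substituting $y = (a_1, b_1)$ and $y' = (a_2, b_2)$ and translating back through the identification of the previous step, symmetry of $\widehat{R}$ becomes $R(a_1, a_2, b_1, b_2) \Rightarrow R(a_2, a_1, b_2, b_1)$, while the diagonal implication becomes $R(a_1, a_2, b_1, b_2) \Rightarrow R(a_1, a_1, b_1, b_1)$. These are exactly the two stated axioms, and since each step is an equivalence the characterization is complete.

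I expect the only real obstacle to be the wire-bending computation in the second step: one must check that reading the prescribed composite as a relation genuinely yields $\widehat{R}((a_1,b_1),(a_2,b_2)) \Leftrightarrow R(a_1,a_2,b_1,b_2)$, with no stray transposition of the $A$ and $B$ coordinates that would alter the shape of the resulting axioms. I would settle this by evaluating the diagram elementwise, using the explicit form of the compact structure on \rel, after which the remaining steps are a routine instance of Lemma~\ref{lem:posrel}.
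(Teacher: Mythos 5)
Your proposal follows the paper's proof essentially verbatim: both use self-duality and map-state duality to identify the \cp{\rel} morphism condition with positivity of an endomorphism on $A \times B$, and then apply Lemma \ref{lem:posrel} to read off the two axioms. The only discrepancy is in the wire-trace: the paper's composite comes out as $\overline{R}(a_1,b_1,a_2,b_2) \Leftrightarrow R(a_2,a_1,b_2,b_1)$ rather than your transpose-free $R(a_1,a_2,b_1,b_2)$, but this is harmless since the two relations are converses of one another and positivity (symmetry plus the diagonal condition) is invariant under taking converses, so the resulting characterization is identical.
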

\begin{proof}
This follows from map-state duality and lemma \ref{lem:posrel}.
For an arbitrary relation $R : A \otimes A \rightarrow B \otimes B$, we define relation:
\begin{align*}
\overline{R} : A \otimes B &\rightarrow A \otimes B\\
\overline{R}(a_1, b_1, a_2, b_2) &\Leftrightarrow R(a_2, a_1, b_2, b_1)
\end{align*}
$\overline{R}$ is symmetric iff:
\begin{equation*}
R(a_1, a_2, b_1, b_2) \Rightarrow R(a_2, a_1, b_2, b_1)
\end{equation*}
$\overline{R}$ satisfies:
\begin{equation*}
\overline{R}(a_1, b_1, a_2, b_2) \Rightarrow \overline{R}(a_1, b_1, a_1, b_1)
\end{equation*}
iff:
\begin{equation*}
R(a_2, a_1, b_2, b_1) \Rightarrow R(a_1, a_1, b_1, b_1)
\end{equation*}
\end{proof}

\section{\cp{\rel} States as Graphs}
We now start to relate \cp{\rel} to suitable types of graphs.
\begin{definition}
An undirected graph $\gamma$ is:
\begin{itemize}
 \item \define{Labelled} if the vertices can be distinguished.
 \item \define{Simple} if every vertex has a self-loop and there are no duplicate edges.
\end{itemize}
A \define{complete graph} is a graph in which there is an edge between
each pair of vertices. We will write:
\begin{itemize}
 \item \vertices{\gamma} for the vertex set of $\gamma$.
 \item \edges{\gamma} for the edge set of $\gamma$.
\end{itemize}
In the sequel, we will refer to simple undirected labelled graphs simply as graphs.
\end{definition}
\begin{remark}
In the literature, simple graphs are normally required to have \emph{no} self-loops. Such
graphs clearly bijectively correspond to those of our definition. The presence of self-loops
will render some later definitions and reasoning more uniform. 
When drawing diagrams of graphs we will leave self-loops implicit to avoid clutter.
\end{remark}
The following observation is at the heart of everything that follows:
\begin{proposition}
\label{prop:graphchar}
The states of a set $X$ in \cp{\rel} bijectively correspond to the graphs on subsets of elements of $X$.
\end{proposition}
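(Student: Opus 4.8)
The plan is to unwind the definition of a state through map-state duality (Lemma~\ref{lem:cpmap}) and then read off a graph directly from the resulting relation. A state of $X$ is a \cp{\rel} morphism from the monoidal unit $I$ to $X$. In \rel the unit is the one-element set $\{\star\}$, and under the self-dual compact structure of Theorem~\ref{thm:selfdual} we have $I \otimes I \cong I$, again a singleton. So by Lemma~\ref{lem:cpmap} a state is a relation $R : I \otimes I \rightarrow X \otimes X$ subject to the two \cp{\rel} axioms, and since its domain is a singleton $R$ is determined by the subset $S \subseteq X \times X$ of pairs related to $\star$. The first step is therefore to record this identification of states with subsets of $X \times X$.

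Next I would specialize the two axioms of Lemma~\ref{lem:cpmap} to the case $A = I$. Substituting $a_1 = a_2 = \star$ the axioms become
\begin{align*}
S(x_1, x_2) &\Rightarrow S(x_2, x_1),\\
S(x_1, x_2) &\Rightarrow S(x_1, x_1),
\end{align*}
so $S$ is exactly a symmetric subset of $X \times X$ with the property that every related element carries a diagonal pair.

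With this in hand the bijection is essentially forced. I would set $\vertices{\gamma} = \{x \in X : (x,x) \in S\}$, a subset of $X$; the second axiom together with symmetry shows that any element occurring in $S$ lies in $\vertices{\gamma}$, so $S \subseteq \vertices{\gamma} \times \vertices{\gamma}$. I would then declare an undirected edge between $x_1$ and $x_2$ precisely when $(x_1, x_2) \in S$: symmetry makes this well defined as an undirected edge, the second axiom supplies a self-loop at every vertex, and because $S$ is a set there are no duplicate edges, so $\gamma$ is a graph in the sense of our definition, with vertices drawn from a subset of $X$. Conversely, symmetrizing the edge set of any such graph recovers a subset $S$ satisfying both axioms, and the two assignments are visibly mutually inverse.

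The only real work is the bookkeeping in the middle step: checking that the two \cp{\rel} axioms collapse to exactly the conditions defining our (simple, symmetric, fully self-looped) graphs, and in particular that the diagonal axiom is what pins down the vertex set and thereby explains why the graphs live on \emph{subsets} of $X$ rather than on all of $X$. Once that correspondence is set up, verifying that the two constructions invert one another is routine.
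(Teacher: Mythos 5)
Your proposal is correct and takes essentially the same route as the paper: both identify a state with a symmetric relation $S \subseteq X \times X$ satisfying $S(x,x') \Rightarrow S(x,x)$, take the diagonal elements as vertices and the related pairs as edges, and observe the two constructions are mutually inverse. The only cosmetic difference is that you reach this characterization by specializing Lemma~\ref{lem:cpmap} to $A = I$, whereas the paper invokes Lemma~\ref{lem:posrel} directly --- but these give literally the same conditions, since Lemma~\ref{lem:cpmap} is itself the map-state-dual form of Lemma~\ref{lem:posrel}.
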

\begin{proof}
Given a point of $X$ in \cp{\rel}, we construct a graph with vertices:
\begin{equation*}
\{ x \mid R(x,x) \}
\end{equation*}
We have an edge from $x$ to $x'$ iff:
\begin{equation*}
R(x,x')
\end{equation*}
For a graph on a subset of elements of $X$, we define a positive morphism $R$
in \rel with $R(x,x')$ iff there is an edge from $x$ to $x'$. This is a positive morphism
as our graphs are undirected and have self-loops on all vertices.

These two constructions are inverse to each other, and therefore witness the required bijection.
\end{proof}
We are now able to provide a closed form for the number of states of a finite set in \cp{\rel}.
\begin{corollary}
An $n$-element finite set in \cp{\rel} has:
\begin{equation*}
\sum_{0 \leq i \leq n} \binom{n}{i} 2^{i (i - 1) /2}
\end{equation*}
states.
\end{corollary}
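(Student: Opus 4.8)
The plan is to apply the bijection established in Proposition \ref{prop:graphchar}, which identifies the states of $X$ in \cp{\rel} with the graphs on subsets of the elements of $X$. Counting the states therefore reduces to counting these graphs, and I would organize the count according to the underlying vertex set, summing the number of graphs on each possible vertex set.

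First I would fix a subset $S \subseteq X$ of cardinality $i$ and count the graphs whose vertex set is exactly $S$. Recall that in our convention every vertex carries a self-loop and there are no duplicate edges, so the self-loops are forced and contribute nothing to the count. The only free data is, for each of the $\binom{i}{2} = i(i-1)/2$ unordered pairs of \emph{distinct} vertices, whether or not an edge is present. These choices are independent, giving exactly $2^{i(i-1)/2}$ graphs on the fixed vertex set $S$.

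Next I would sum over all possible vertex sets. Since a vertex set may be any subset of $X$, and the number of graphs depends only on its cardinality $i$, I would group the subsets by size: there are $\binom{n}{i}$ subsets of size $i$, each carrying $2^{i(i-1)/2}$ graphs. Summing over $0 \leq i \leq n$ then yields the claimed total
\begin{equation*}
\sum_{0 \leq i \leq n} \binom{n}{i} 2^{i(i-1)/2}.
\end{equation*}

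The only point requiring care, and the main potential pitfall, is the treatment of self-loops. Because our notion of graph mandates a self-loop at each vertex, these loops are determined entirely by the choice of vertex set and must not be counted as free edges; the genuinely free choices are precisely the edges between distinct vertices, which is what produces the exponent $i(i-1)/2$ rather than $i(i+1)/2$. Once this convention is handled correctly, the remainder is a routine application of the binomial grouping above.
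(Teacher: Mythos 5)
Your proof is correct and follows essentially the same route as the paper: invoke Proposition \ref{prop:graphchar}, count $2^{i(i-1)/2}$ graphs on each $i$-element vertex set (self-loops being forced), and sum over the $\binom{n}{i}$ choices of vertex set. Your write-up is simply a more careful elaboration of the paper's argument, with the self-loop convention made explicit.
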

\begin{proof}
There are $2^{i(i-1)/2}$ graphs with $i$ vertices. There are $\binom{n}{i}$
ways of choosing $0 \leq i \leq n$ vertices from $X$. We then simply count all these possible graphs to get
our total.
\end{proof}
We now characterize pure states in a manner that is literally easy to see:
\begin{proposition}
\label{prop:graphmix}
A state in \cp{\rel} is pure iff the corresponding graph is complete.
\end{proposition}
\begin{proof}
We identify subsets $U \subseteq X$ with relations $U : \{*\} \rightarrow X$ in the obvious way.

For $U \subseteq X$, the positive morphism corresponding to this pure state
is given by $U \circ U^{\circ}$. Now assume distinct $u,v \in U$. We then have $U^{\circ}(u,*) \wedge U(*,v)$
and so $U \circ U^{\circ}(u,v)$. The corresponding graph of $U$ therefore has an edge between each pair
of distinct elements of $U$.

In the opposite direction, a connected graph on vertex set $V$ corresponds to the positive morphism given by
the relation $V \circ V^\circ$.
\end{proof}
We note the following observation that runs counter to some of our intuition about how mixing should
behave:
\begin{lemma}
\label{lem:mixing}
There exist pure states in \cp{\rel} that can be given as a ``convex combination'' (union) of mixed states.
\end{lemma}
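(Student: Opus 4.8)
The plan is to read the lemma entirely through the graph correspondence of Proposition~\ref{prop:graphchar}, under which the states of a set in \cp{\rel} become graphs, together with the purity criterion of Proposition~\ref{prop:graphmix}, which tells us a state is pure exactly when its graph is complete. The only extra ingredient needed is a description, at the level of graphs, of the ``convex combination'' (union) operation coming from the join semilattice enrichment.

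First I would identify the union of two states with the union of their underlying positive relations. Writing $R_\gamma$ for the positive relation associated with a graph $\gamma$, so that $R_\gamma(x,x')$ holds precisely when $\{x,x'\}$ is an edge of $\gamma$ and the self-loops record the vertices, the relation $R_{\gamma_1} \cup R_{\gamma_2}$ has as its vertices the union of the two vertex sets and as its edges the union of the two edge sets. I must verify that this union is again a legitimate state, i.e.\ a positive relation: by Lemma~\ref{lem:posrel} it suffices that both symmetry and the implication $R(x,x') \Rightarrow R(x,x)$ are preserved under union, and each is immediate. Hence taking the union of states corresponds exactly to graph union.

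With this translation in place the lemma reduces to exhibiting a complete graph that arises as the union of two non-complete graphs. I would take the three-element set $X = \{a,b,c\}$ together with the graphs $\gamma_1$ and $\gamma_2$, both on the full vertex set, where $\gamma_1$ carries the edges $\{a,b\}$ and $\{b,c\}$ while $\gamma_2$ carries the edges $\{a,b\}$ and $\{a,c\}$. Since $\gamma_1$ omits the edge $\{a,c\}$ and $\gamma_2$ omits $\{b,c\}$, Proposition~\ref{prop:graphmix} shows both are mixed states. Their union carries all three edges $\{a,b\}, \{b,c\}, \{a,c\}$ on the vertex set $\{a,b,c\}$, so it is the complete graph on three vertices, which Proposition~\ref{prop:graphmix} identifies as pure.

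There is no deep obstacle here; the substance of the argument is the identification of the join operation with graph union and the check that positivity survives it, after which the example does all the work. It is worth remarking that no such example can be found on fewer than three vertices: on a two-element set the only non-complete graph is the one omitting its single edge, and no union of such graphs can reinstate that edge, so three vertices is minimal.
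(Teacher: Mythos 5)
Your proposal is correct and takes essentially the same route as the paper: the paper also exhibits two two-edge paths on a three-element set (mixed by Proposition~\ref{prop:graphmix}) whose union is the complete triangle, hence pure. Your explicit check that unions preserve positivity (via Lemma~\ref{lem:posrel}) and the minimality remark are sound additions the paper leaves implicit, but they do not change the argument.
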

\begin{proof}
We consider the following two states of the set $\{x,y,z\}$:
\begin{equation*}
\begin{gathered}
\begin{tikzpicture}[scale=0.5,node distance=1.5cm]
\node (tl) {$x$};
\node[below of=tl, right of=tl] (bot) {$y$};
\node[above of=bot, right of=bot] (tr) {$z$};
\draw (tl) -- (tr) (tl) -- (bot);
\end{tikzpicture}
\end{gathered}
\qquad
\begin{gathered}
\begin{tikzpicture}[scale=0.5, node distance=1.5cm]
\node (tl) {$x$};
\node[below of=tl, right of=tl] (bot) {$y$};
\node[above of=bot, right of=bot] (tr) {$z$};
\draw (tl) -- (tr) (tr) -- (bot);
\end{tikzpicture}
\end{gathered}
\end{equation*}
We note that these are not pure states by proposition \ref{prop:graphmix}. If we form the union of the underlying
relations, we get the state:
\begin{equation*}
\begin{tikzpicture}[scale=0.5, node distance=1.5cm]
\node (tl) {$x$};
\node[below of=tl, right of=tl] (bot) {$y$};
\node[above of=bot, right of=bot] (tr) {$z$};
\draw (tl) -- (tr) (tl) -- (bot) (tr) -- (bot);
\end{tikzpicture}
\end{equation*}
As this graph is complete, the corresponding state is a pure state by proposition \ref{prop:graphmix}.
\end{proof}

\section{A Category of Graphs}
We now pursue a description of the category \cp{\rel} entirely in terms graphs and operations upon graphs.
Via map-state duality, as a corollary of proposition \ref{prop:graphchar}, there is a bijection between
$\cp{\rel}(A,B)$ and the set of graphs with vertices a subset of $A \times B$. We will therefore
be interested in graphs of this form when considering the morphisms in \cp{\rel}.
\begin{definition}
\label{def:graphon}
For sets $A,B$:
\begin{itemize}
 \item If $\gamma$ is a graph with vertex labels a subset of $A \times B$, we will write $\gamma : A \rightarrow B$.
 \item For a set $A$ we define the graph $1_A : A \rightarrow A$ as the complete graph with vertices the diagonal of $A \times A$.
\end{itemize}
\end{definition}

\begin{definition}[Graph Composition]
\label{def:graphcomp}
For graphs $\gamma : A \rightarrow B$ and $\gamma' : B \rightarrow C$,
we define the graph $\gamma' \circ \gamma : A \rightarrow C$ as follows:
\begin{align*}
\vertices{\gamma'\circ \gamma} &:= \{ (a,c) \mid \exists b \in B. \; (a,b) \in \vertices{\gamma} \;\wedge (b,c) \; \in \vertices{\gamma'} \}\\
\edges{\gamma' \circ \gamma} &:= \{ \{ (a,c), (a',c') \} \mid \exists b, b' \in B. \; \{(a,b),(a',b')\}\in \edges{\gamma} \; \wedge\; \{(b,c),(b',c')\} \in \edges{\gamma'} \}
\end{align*}
\end{definition}
\begin{remark}
Informally, we glue together vertices and edges where their $B$ components agree. Notice
that in order to calculate the vertices of the composite, we only need consider vertices
in the components. Similarly, when calculating the edges, we need only consider edges in 
the components.
\end{remark}

\begin{example}
The composite of the following two graphs:
\begin{trivlist}\item
\begin{minipage}{0.495\textwidth}
\begin{center}
\begin{tikzpicture}[node distance=2cm]
\node (l) {$(a,b)$};
\node[right of=l] (r) {$(a',b')$};
\draw (l) -- (r);
\end{tikzpicture}
\end{center}
\end{minipage}
\begin{minipage}{0.495\textwidth}
\begin{center}
\begin{tikzpicture}[node distance=2cm]
\node (tl) {$(b,c)$};
\node[below of=tl] (bl) {$(b'',c)$};
\node[right of=tl] (tr) {$(b,c')$};
\node[below of=tr] (br) {$(b',c'')$};
\draw (tl) -- (tr)
      (tl) -- (br);
\end{tikzpicture}
\end{center}
\end{minipage}
\end{trivlist}
is given by the graph:
\begin{center}
\begin{tikzpicture}[node distance=2cm]
\node (l) {$(a,c)$};
\node[right of=l] (tr) {$(a,c')$};
\node[below of=tr] (br) {$(a',c'')$};
\draw (l) -- (tr) 
      (l) -- (br);
\end{tikzpicture}
\end{center}
\end{example}

\begin{lemma}
\label{lem:compwellbehaved}
For graphs $\gamma : A \rightarrow B$, $\gamma' : B \rightarrow C$, $\gamma'' : C \rightarrow D$,
and $1_{A}, 1_B$ as in definition \ref{def:graphon}:
\begin{itemize}
 \item $\gamma'' \circ (\gamma' \circ \gamma) = (\gamma'' \circ \gamma') \circ \gamma$
 \item $1_B \circ \gamma = \gamma$
 \item $\gamma \circ 1_A = \gamma$
\end{itemize}
\end{lemma}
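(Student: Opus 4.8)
The plan is to verify each of the three equations directly from Definition \ref{def:graphcomp}, proving equality of graphs by showing equality of their vertex sets and edge sets separately. Since the composition operation is defined component-wise on vertices and edges, and since both components have exactly the same existential-relational form, I expect the verification for edges to mirror the verification for vertices almost verbatim. My strategy throughout is to unfold the definitions on both sides and show that the two resulting logical conditions are equivalent, typically by reassociating nested existential quantifiers.

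For \emph{associativity}, consider the vertex sets first. A vertex $(a,d)$ lies in $\vertices{\gamma'' \circ (\gamma' \circ \gamma)}$ iff there exists $c \in C$ with $(a,c) \in \vertices{\gamma' \circ \gamma}$ and $(c,d) \in \vertices{\gamma''}$; unfolding the inner composite, this becomes: there exist $b \in B$ and $c \in C$ with $(a,b) \in \vertices{\gamma}$, $(b,c) \in \vertices{\gamma'}$, and $(c,d) \in \vertices{\gamma''}$. Unfolding the right-hand side $\vertices{(\gamma'' \circ \gamma') \circ \gamma}$ yields the identical symmetric condition. The equivalence is then just the associativity and commutativity of conjunction together with the fact that nested existentials $\exists c.\, \exists b$ may be reordered to $\exists b.\, \exists c$. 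The edge sets are handled by the same argument applied to the primed/unprimed pairs, so associativity follows.

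For the \emph{identity laws}, I would unfold $1_B \circ \gamma$ using that $1_B$ has vertex set the diagonal $\{(b,b) \mid b \in B\}$ and is complete. A vertex $(a,b)$ lies in $\vertices{1_B \circ \gamma}$ iff there exists $b' \in B$ with $(a,b') \in \vertices{\gamma}$ and $(b',b) \in \vertices{1_B}$; since $(b',b) \in \vertices{1_B}$ forces $b' = b$, this reduces precisely to $(a,b) \in \vertices{\gamma}$. For the edges, an edge $\{(a,b),(a',b')\}$ lies in $\edges{1_B \circ \gamma}$ iff there are matching edges $\{(a,c),(a',c')\} \in \edges{\gamma}$ and $\{(c,b),(c',b')\} \in \edges{1_B}$; here completeness of $1_B$ matters, because any pair of diagonal vertices is joined, so the second condition holds exactly when $c = b$ and $c' = b'$ (the diagonal vertices being present), giving back the original edge of $\gamma$. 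The case $\gamma \circ 1_A$ is entirely symmetric.

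\textbf{The main obstacle} I anticipate is not the algebra of quantifiers, which is routine, but correctly accounting for the interaction between the two defining clauses of a graph in the identity laws: one must confirm that requiring the \emph{matching} $1_B$-edge to exist does not silently discard any edge of $\gamma$, nor spuriously add one. This is exactly where the two non-standard conventions of the paper earn their keep. First, the stipulation (Definition, Simple) that every vertex carries a self-loop guarantees that each vertex $(a,b) \in \vertices{\gamma}$ furnishes the degenerate edge $\{(a,b),(a,b)\}$ needed to recover vertices as self-looped edges; without self-loops the vertex-set computation and the edge-set computation could disagree. Second, the completeness of $1_B$ ensures every required diagonal $1_B$-edge is actually present, so no edge of $\gamma$ is lost in composition. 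The careful bookkeeping here — checking that self-loops survive and that no degenerate edge between \emph{distinct} diagonal labels is wrongly introduced — is the one place where I would slow down and argue explicitly rather than appeal to symmetry.
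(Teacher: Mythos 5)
Your proof is correct, and it is exactly the argument the paper intends: the paper states Lemma \ref{lem:compwellbehaved} without proof, treating it as a routine verification, and your direct unfolding of Definition \ref{def:graphcomp} (quantifier reshuffling for associativity, plus the observation that completeness and self-loops of $1_A$, $1_B$ are precisely what make the identity laws hold on edges) supplies that omitted verification. No gaps; the points you flag as needing care are indeed the only places where the paper's non-standard graph conventions do real work.
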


\begin{definition}
We define the category \graphcat of sets and graphs between them as having:
\begin{itemize}
 \item {\bf Objects}: Sets
 \item {\bf Morphisms}: A morphism $A \rightarrow B$ is a graph $\gamma : A \rightarrow B$ as in definition \ref{def:graphon}
\end{itemize}
Composition is as described in definition \ref{def:graphcomp}. The identities and associativity
of composition then follow from lemma \ref{lem:compwellbehaved}
\end{definition}
We will require the following technical lemma:
\begin{lemma}
\label{lem:tech}
Let $R : A \rightarrow B$ and $S : B \rightarrow C$ be morphisms in \cp{\rel}. Then for all $a \in A$
and $c \in C$:
\begin{equation*}
(S \circ R)(a,a,c,c)\quad\Leftrightarrow\quad \exists b \in B.\,R(a,a,b,b)\,\wedge\,S(b,b,c,c)
\end{equation*}
\end{lemma}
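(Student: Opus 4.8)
The plan is to unfold the composite $S \circ R$ as an ordinary relational composition in \rel and then invoke the self-loop axiom of Lemma \ref{lem:cpmap} to force the two intermediate witnesses to coincide. Since composition in \cp{\rel} is inherited from \rel, the relation $S \circ R$ is just the relational composite of $R : A \otimes A \rightarrow B \otimes B$ with $S : B \otimes B \rightarrow C \otimes C$. Spelling this out at the diagonal arguments gives
\begin{equation*}
(S \circ R)(a,a,c,c) \;\Leftrightarrow\; \exists b_1, b_2 \in B.\; R(a,a,b_1,b_2) \,\wedge\, S(b_1,b_2,c,c).
\end{equation*}
The statement to be proved therefore amounts to the claim that these intermediate witnesses $b_1$ and $b_2$ may always be chosen equal.

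The backward implication is immediate: given a single $b$ with $R(a,a,b,b)$ and $S(b,b,c,c)$, instantiating the unfolded composite with $b_1 = b_2 = b$ directly yields $(S \circ R)(a,a,c,c)$.

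For the forward implication I would apply the second axiom of Lemma \ref{lem:cpmap}, the diagonal \emph{self-loop} condition, to each factor separately. From $R(a,a,b_1,b_2)$ that axiom produces $R(a,a,b_1,b_1)$, and from $S(b_1,b_2,c,c)$ it produces $S(b_1,b_1,c,c)$; setting $b := b_1$ then supplies the common witness. The only real content of the argument is recognising that the self-loop axiom is exactly what collapses the two halves of the intermediate wire onto a single value, so that the symmetry axiom is not even needed here. Once that observation is made the proof is a one-line instantiation, and I expect the sole point requiring care to be keeping the four argument positions of $R$ and of $S$ aligned correctly.
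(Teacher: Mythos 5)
Your proof is correct and is essentially identical to the paper's: both unfold $S \circ R$ as relational composition, handle the backward direction by instantiating both intermediate witnesses at the same $b$, and handle the forward direction by applying the second (self-loop) axiom of Lemma \ref{lem:cpmap} to each of $R$ and $S$ separately, collapsing the pair of witnesses onto the first one. Your added remark that the symmetry axiom is not needed is also accurate.
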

\begin{proof}
The right to left direction is immediate from the definition of relational composition. For the other
direction, if:
\begin{equation*}
(S \circ R)(a,a,c,c)
\end{equation*}
there then exist $b,b' \in B$ such that:
\begin{equation*}
R(a,a,b,b') \quad\wedge\quad S(b,b',c,c)
\end{equation*}
By lemma \ref{lem:cpmap} we then have:
\begin{equation*}
R(a,a,b,b) \quad\wedge\quad S(b,b,c,c)
\end{equation*}
\end{proof}

\begin{definition}
\label{def:graphfunctor}
For a \cp{\rel} morphism $R : A \rightarrow B$, we define the graph $G(R) : A \rightarrow B$:
\begin{align*}
\vertices{G(R)} &:= \{ (a,b) \mid R(a,a,b,b) \}\\
\edges{G(R)} &:= \{ \{ (a,b), (a',b') \} \mid R(a,a',b,b') \}
\end{align*}
\end{definition}

\begin{proposition}
The mapping of definition \ref{def:graphfunctor} extends to
an identity on objects functor:
\begin{equation*}
G : \cp{\rel} \rightarrow \graphcat
\end{equation*}
\end{proposition}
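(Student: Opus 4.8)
The plan is to verify the three obligations that make $G$ a functor: that each $G(R)$ is a legitimate morphism of \graphcat, that $G$ preserves identities, and that $G$ preserves composition. Identity-on-objects is immediate, since both categories have sets as objects and $G$ leaves the object unchanged. First I would check well-definedness. By construction the vertices of $G(R)$ lie in $A \times B$, so I only need to confirm $G(R)$ is a genuine simple undirected graph with a self-loop on every vertex. Undirectedness amounts to the edge condition $R(a,a',b,b')$ being invariant under swapping the two endpoints $(a,b)$ and $(a',b')$, which follows by applying the first axiom of Lemma \ref{lem:cpmap} in both directions. The self-loop at a vertex $(a,b)$ is the edge $\{(a,b),(a,b)\}$, whose defining condition $R(a,a,b,b)$ coincides with the membership condition for $(a,b)$ in the vertex set, so every vertex carries a self-loop. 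Finally, applying the second axiom of Lemma \ref{lem:cpmap} to any edge condition $R(a,a',b,b')$ yields $R(a,a,b,b)$, while symmetrising first and then applying it yields $R(a',a',b',b')$, so both endpoints of every edge are vertices.

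Next I would treat identity preservation. The identity on $A$ in \cp{\rel} is inherited from \rel, namely the identity relation on $A \otimes A$, so $R(a_1,a_2,a_1',a_2')$ holds iff $a_1=a_1'$ and $a_2=a_2'$. Computing $G(1_A)$ from Definition \ref{def:graphfunctor}, the vertices are the pairs $(a,a')$ with $a=a'$, that is the diagonal, and an edge joins $(a_1,a_1)$ to $(a_2,a_2)$ for every choice of $a_1,a_2$. This is precisely the complete graph on the diagonal of $A \times A$, which is the identity $1_A$ of \graphcat by Definition \ref{def:graphon}.

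The substance of the argument is preservation of composition: $G(S \circ R) = G(S) \circ G(R)$ for $R : A \rightarrow B$ and $S : B \rightarrow C$. For the edge sets I would simply unfold relational composition over the middle object $B \otimes B$: the condition $(S \circ R)(a,a',c,c')$ holds iff there exist $b,b'$ with $R(a,a',b,b')$ and $S(b,b',c,c')$, which matches the edge clause of Definition \ref{def:graphcomp} verbatim. The vertex sets are where the real content lies, and I expect this to be the main obstacle. Naive relational composition only supplies $(S \circ R)(a,a,c,c)$ through witnesses $b,b'$ that need not coincide, whereas Definition \ref{def:graphcomp} demands a single diagonal witness $b$ with $(a,b)$ a vertex of $G(R)$ and $(b,c)$ a vertex of $G(S)$. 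Bridging this gap is exactly the purpose of Lemma \ref{lem:tech}, whose left-to-right direction uses the CPM axioms to collapse the off-diagonal witnesses onto the diagonal. Invoking Lemma \ref{lem:tech} identifies the two vertex sets, and together with the edge computation this establishes $G(S \circ R) = G(S) \circ G(R)$, completing the verification that $G$ is a functor.
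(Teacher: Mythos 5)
Your proposal is correct and takes essentially the same route as the paper: identities are verified by direct computation, edge preservation under composition follows by unfolding relational composition, and the vertex case is handled by invoking Lemma \ref{lem:tech}, exactly as in the paper's proof. Your additional well-definedness check that $G(R)$ is a genuine graph (undirected, self-loops on all vertices, edge endpoints are vertices) is left implicit in the paper --- it follows from Lemma \ref{lem:cpmap} and the map-state duality discussion preceding the definition --- so spelling it out is a harmless strengthening rather than a different approach.
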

\begin{proof}
We first check identities are preserved:
\begin{trivlist}\item
\begin{minipage}{0.495\textwidth}
\begin{eqproof*}
\vertices{G(1_A)}
\explain{ definition \eqref{def:graphfunctor} }
\{ (a_1, a_2) \mid R(a_1, a_1, a_2, a_2) \}
\explain{ identity relations }
\{ (a_1, a_2) \} 
\explain{ definition }
\vertices{1_{G(A)}}
\end{eqproof*}
\end{minipage}
\begin{minipage}{0.495\textwidth}
\begin{eqproof*}
\edges{G(1_A)}
\explain{ definition \eqref{def:graphfunctor} }
\{ \{ (a_1, a_2),(a_3, a_4) \} \mid 1_A(a_1, a_3, a_2, a_4) \}
\explain{ identity relations }
\{ \{ (a_1, a_1),(a_3, a_3) \}
\explain{ definition }
\edges{1_{G(A)}}
\end{eqproof*}
\end{minipage}
\end{trivlist}
We must also check composition is preserved. For vertices:
\begin{eqproof*}
\vertices{G(S \circ R)}
\explain{ definition \ref{def:graphfunctor} }
\{ (a,c) \mid (S \circ R)(a,a,c,c) \}
\explain{ lemma \ref{lem:tech} }
\{ (a,c) \mid \exists b.\; R(a,a,b,b) \;\wedge\; S(b,b,c,c) \}
\explain{ definition \ref{def:graphfunctor} }
\{ (a,c) \mid \exists b.\;(a,b) \in \vertices{G(R)}\;\wedge\; (b,c) \in \vertices{G(S)} \}
\explain{ definition \ref{def:graphcomp} }
\vertices{G(S) \circ G(R)}
\end{eqproof*}
For edges:
\begin{eqproof*}
\edges{G(S \circ R)}
\explain{ definition \ref{def:graphfunctor} }
\{ \{ (a,c), (a',c')\} \mid (S \circ R)(a,a',c,c') \}
\explain{ relation composition }
\{ \{ (a,c), (a',c')\} \mid \exists b,b' \in B.\;R(a,a',b,b') \;\wedge\; S(b,b',c,c') \}
\explain{ definition \ref{def:graphfunctor} }
\{ \{(a,c), (a',c')\} \mid \exists b,b' \in B.\;\{ (a,b), (a',b') \} \in \edges{G(R)}\;\wedge\;\{ (b,c), (b',c') \} \in \edges{G(S)} \}
\explain{ definition \ref{def:graphcomp} }
\edges{G(S) \circ G(R) }
\end{eqproof*}
\end{proof}

\begin{definition}
\label{def:relfunctor}
For a graph $\gamma : A \rightarrow B$ we define the \cp{\rel} morphism $C(\gamma) : A \rightarrow B$
as follows:
\begin{equation*}
C(\gamma)(a,a',b,b') \Leftrightarrow \{ (a,b), (a',b') \} \in \edges{\gamma}
\end{equation*}
\end{definition}

\begin{proposition}
The mapping  of definition \ref{def:relfunctor} extends to an
identity on objects functor:
\begin{equation*}
C : \graphcat \rightarrow \cp{\rel}
\end{equation*}
\end{proposition}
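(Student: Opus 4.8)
The plan is to discharge the three obligations that make $C$ a functor: that $C(\gamma)$ is genuinely a \cp{\rel} morphism for each graph $\gamma$, that $C$ preserves identities, and that it preserves composition. Since $C$ is by construction identity on objects, these are the only things to check.

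First I would establish well-definedness by verifying the two axioms of lemma \ref{lem:cpmap}. The axiom $C(\gamma)(a_1,a_2,b_1,b_2) \Rightarrow C(\gamma)(a_2,a_1,b_2,b_1)$ unfolds, via definition \ref{def:relfunctor}, to the implication $\{(a_1,b_1),(a_2,b_2)\} \in \edges{\gamma} \Rightarrow \{(a_2,b_2),(a_1,b_1)\} \in \edges{\gamma}$, which is immediate since edges are unordered pairs (the graph is undirected). The axiom $C(\gamma)(a_1,a_2,b_1,b_2) \Rightarrow C(\gamma)(a_1,a_1,b_1,b_1)$ unfolds to: if $\{(a_1,b_1),(a_2,b_2)\}$ is an edge then $(a_1,b_1)$ is a vertex, and since our graphs are simple every vertex carries a self-loop, so $\{(a_1,b_1),(a_1,b_1)\} \in \edges{\gamma}$, which is exactly $C(\gamma)(a_1,a_1,b_1,b_1)$. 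Thus the two structural features of our graphs correspond precisely to the two defining axioms of \cp{\rel} morphisms.

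Next I would check the functor laws. For identities, unfolding $C(1_A)(a_1,a_2,a_3,a_4)$ gives $\{(a_1,a_3),(a_2,a_4)\} \in \edges{1_A}$; since $1_A$ is the complete graph on the diagonal, this holds exactly when $a_1=a_3$ and $a_2=a_4$, which is the identity relation on $A\otimes A$, i.e.\ the \cp{\rel} identity on $A$. For composition, I would unfold $C(\gamma'\circ\gamma)(a,a',c,c')$ using definition \ref{def:relfunctor} and then the edge clause of definition \ref{def:graphcomp} to obtain the formula $\exists b,b'\in B.\,\{(a,b),(a',b')\}\in\edges{\gamma}\wedge\{(b,c),(b',c')\}\in\edges{\gamma'}$. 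Unfolding the relational composite $(C(\gamma')\circ C(\gamma))(a,a',c,c')$ produces $\exists b,b'\in B.\,C(\gamma)(a,a',b,b')\wedge C(\gamma')(b,b',c,c')$, and substituting the definition of $C$ yields the identical formula. This mirrors the edge computation already carried out for $G$, so the two composites agree.

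I do not expect any step to be a real obstacle: the clauses of definition \ref{def:graphcomp} and definition \ref{def:relfunctor} were arranged precisely so that the edge condition transports verbatim across relational and graph composition. The only point needing a moment's care is well-definedness, where one must notice that ``undirected'' and ``simple with self-loops'' are exactly what the two axioms of lemma \ref{lem:cpmap} demand. As an alternative route, one could observe that $C$ is inverse to $G$ on hom-sets---both are governed by the equivalences $(a,b)\in V \Leftrightarrow R(a,a,b,b)$ and $\{(a,b),(a',b')\}\in E \Leftrightarrow R(a,a',b,b')$, with self-loops recovering vertices in the simple-graph convention---so functoriality of $C$ would follow formally from that of $G$ together with the bijection of proposition \ref{prop:graphchar}; I would nonetheless present the direct verification above as it is shorter and self-contained.
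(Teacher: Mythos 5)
Your proof is correct and follows essentially the same route as the paper: the identity and composition checks proceed by exactly the same unfolding of definition \ref{def:relfunctor} against definition \ref{def:graphcomp}, matching the paper's computations line for line. The one addition is your explicit verification that $C(\gamma)$ satisfies the two axioms of lemma \ref{lem:cpmap}; the paper omits this because the bijection between $\cp{\rel}(A,B)$ and graphs on subsets of $A \times B$ (proposition \ref{prop:graphchar} together with map-state duality) already guarantees well-definedness, but your direct check---undirectedness giving the symmetry axiom, self-loops giving the second axiom---is a worthwhile, self-contained alternative.
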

\begin{proof}
That the mapping is identity on objects is trivial. To confirm identities are preserved, we have:
\begin{eqproof*}
C(1_A)(a_1,a_2,a_3,a_4)
\explain[\Leftrightarrow]{ definition \eqref{def:relfunctor} }
\{ (a_1, a_3), (a_2, a_4) \} \in \edges{1_A}
\explain[\Leftrightarrow]{ definition }
a_1 = a_3\;\wedge\;a_2 = a_4
\explain[\Leftrightarrow]{ identity relations }
1_{C(A)}(a_1,a_2,a_3,a_4)
\end{eqproof*}
For associativity of composition, we reason as follows:
\begin{eqproof*}
C(\gamma' \circ \gamma)(a,a',c,c')
\explain[\Leftrightarrow]{ definition \ref{def:relfunctor} }
\{ (a,c),(a',c') \} \in \edges{\gamma' \circ \gamma }
\explain[\Leftrightarrow]{ definition \ref{def:graphcomp} }
\exists b,b' \in B.\;\{ (a,b),(a',b') \} \in \edges{\gamma}\;\wedge\;\{ (b,c), (b',c') \} \in \edges{\gamma'}
\explain[\Leftrightarrow]{ definition \ref{def:relfunctor} }
\exists b,b' \in B.\;C(\gamma)(a,a',b,b')\;\wedge\;C(\gamma') (b,b',c,c')
\explain[\Leftrightarrow]{ relation composition }
(C(\gamma') \circ C(\gamma))(a,a',c,c')
\end{eqproof*}
\end{proof}

\begin{theorem}
\label{theorem:catiso}
The functors $G : \cp{\rel} \rightarrow \graphcat$ of definition \ref{def:graphfunctor} and $C : \graphcat \rightarrow \cp{\rel}$ 
of definition \ref{def:relfunctor} are inverse to each other.
\end{theorem}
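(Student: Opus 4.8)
The plan is to establish the two functor equations $C\circ G = \mathrm{id}_{\cp{\rel}}$ and $G\circ C = \mathrm{id}_{\graphcat}$. Both $G$ and $C$ are identity on objects, so it suffices to verify that each composite acts as the identity on morphisms, and in both directions this reduces to unwinding Definitions \ref{def:graphfunctor} and \ref{def:relfunctor}.

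For $C\circ G = \mathrm{id}_{\cp{\rel}}$, I would take a morphism $R:A\rightarrow B$ and compute, for arbitrary $a,a',b,b'$, that
\begin{equation*}
C(G(R))(a,a',b,b') \;\Leftrightarrow\; \{(a,b),(a',b')\}\in\edges{G(R)} \;\Leftrightarrow\; R(a,a',b,b'),
\end{equation*}
using Definition \ref{def:relfunctor} and then the defining condition on $\edges{G(R)}$; this gives $C(G(R)) = R$ at once. For $G\circ C = \mathrm{id}_{\graphcat}$, I would fix a graph $\gamma:A\rightarrow B$ and check that $G(C(\gamma))$ agrees with $\gamma$ on edges and on vertices separately. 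Equality of the edge sets is immediate from the two definitions. Equality of the vertex sets instead uses the self-loop convention: as $\gamma$ is simple, every vertex carries a self-loop, so $(a,b)\in\vertices{\gamma}$ iff $\{(a,b)\}\in\edges{\gamma}$, whence $\vertices{G(C(\gamma))} = \{(a,b)\mid C(\gamma)(a,a,b,b)\} = \{(a,b)\mid\{(a,b)\}\in\edges{\gamma}\} = \vertices{\gamma}$.

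I expect the only delicate points to be bookkeeping. The edge sets consist of \emph{unordered} pairs, so membership of $\{(a,b),(a',b')\}$ may be witnessed by either $R(a,a',b,b')$ or $R(a',a,b',b)$; these coincide exactly by the symmetry axiom of Lemma \ref{lem:cpmap}, which is what makes $\edges{G(R)}$ well defined and renders the equivalences above unambiguous. Moreover, since $C$ retains only edge data, recovering the vertices of $\gamma$ depends essentially on graphs being simple and hence encoding their vertices as self-loops, which is precisely the reason for adopting that convention. As neither point requires anything beyond care, I anticipate no substantial obstacle.
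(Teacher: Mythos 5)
Your proposal is correct and follows essentially the same route as the paper's proof: both directions are verified by directly unwinding Definitions \ref{def:graphfunctor} and \ref{def:relfunctor}, with the vertex set recovered via the convention that simple graphs encode vertices as self-loops. Your additional remark that the symmetry axiom of Lemma \ref{lem:cpmap} makes $\edges{G(R)}$ well defined as a set of unordered pairs is a point the paper leaves implicit, but it does not change the argument.
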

\begin{proof}
In the first direction:
\begin{eqproof*}
(C \circ G)(R)(a,a',b,b')
\explain[\Leftrightarrow]{ definition \ref{def:relfunctor} }
\{ (a,b), (a',b') \} \in \edges{G(R)}
\explain[\Leftrightarrow]{ definition \ref{def:graphfunctor} }
R(a,a',b,b')
\end{eqproof*}
For the other direction, for vertices:
\begin{eqproof*}
\vertices{(G \circ C)(\gamma)}
\explain{ definition \ref{def:graphfunctor} }
\{ (a,b) \mid C(\gamma)(a,a,b,b) \}
\explain{ definition \ref{def:relfunctor} }
\{ (a,b) \mid \{ (a,b) \} \in \edges{\gamma} \}
\explain{ self-loops correspond to vertices }
\{ (a,b) \mid (a,b) \in \vertices{\gamma} \}
\explain{ set theory }
\vertices{\gamma}
\end{eqproof*}
Finally, for edges:
\begin{eqproof*}
\edges{(G \circ G)(\gamma)}
\explain{ definition \ref{def:graphfunctor} }
\{ \{ (a,b), (a',b') \} \mid C(\gamma)(a,a',b,b') \}
\explain{ definition \ref{def:relfunctor} }
\{ \{ (a,b), (a',b') \} \mid \{ (a,b), (a',b') \} \in \edges{\gamma} \}
\explain{ set theory }
\edges{\gamma}
\end{eqproof*}
\end{proof}

\section{A Structured Category of Graphs}
As the categories \graphcat and \cp{\rel} are isomorphic, we can transfer structure from \cp{\rel} to \graphcat. 
Firstly, the monoidal structure on \cp{\rel} induces a monoidal structure on \graphcat. 
As the functors witnessing the isomorphism are identity on objects, the tensor of objects in
\graphcat is also given by the cartesian product of sets.
The tensor product of morphisms in \graphcat is given explicitly as follows:
\begin{definition}[Graph Tensor]
\label{def:monoidal}
For graphs $\gamma : A \rightarrow C$, $\gamma' : B \rightarrow D$
we define the graph $\gamma \otimes \gamma' : A \otimes B \rightarrow C \otimes D$ as follows:
\begin{align*}
\vertices{\gamma \otimes \gamma'} &:= \{ (a,b,c,d) \mid (a,c) \in \vertices{\gamma} \; \wedge \; (b,d) \in \vertices{\gamma'}\}\\
\edges{\gamma \otimes \gamma'} &:= \{ \{ (a,b,c,d), (a',b',c',d') \} \mid \{ (a,c), (a',c') \} \in \edges{\gamma} \; \wedge \; \{ (b,d), (b',d') \} \in \edges{\gamma'} \} 
\end{align*}
\end{definition}
\begin{remark}
Informally, the vertices of the composite graph are all combinations of vertices from the two component graphs.
There is then an edge between a pair of vertices if there are edges between both pairs of component vertices.
Again, when calculating the vertices, we need only consider the vertices of the components, and when calculating
edges we need only consider component edges.
\end{remark}

\begin{example}
The tensor of the following pair of graphs:
\begin{trivlist}\item
\begin{minipage}{0.495\textwidth}
\begin{center}
\begin{tikzpicture}[scale=0.5]
\path node(l) {$(a,c)$} ++(3,0) node(r) {$(a',c')$};
\draw (l) -- (r);
\end{tikzpicture}
\end{center}
\end{minipage}
\begin{minipage}{0.495\textwidth}
\begin{tikzpicture}[scale=0.5]
\path node(l) {$(b,d)$} ++(3,0) node(tr) {$(b',d')$} ++(0,-3) node(br) {$(b'',d'')$};
\draw (l) -- (tr) (l) -- (br);
\end{tikzpicture}
\end{minipage}
\end{trivlist}
is given by the graph:
\begin{center}
\begin{tikzpicture}[node distance=3cm]
\node (tl) {$(a,b'',c,d'')$};
\node[below of=tl] (ml) {$(a,b,c,d)$};
\node[below of=ml] (bl) {$(a,b',c,d')$};
\node[right of=tl] (tr) {$(a',b'',c',d'')$};
\node[below of=tr] (mr) {$(a',b,c',d)$};
\node[below of=mr] (br) {$(a',b',c',d')$};
\draw (tl) -- (tr)
      (ml) -- (mr)
      (bl) -- (br)
      (tl) -- (mr)
      (ml) -- (tr)
      (ml) -- (br)
      (bl) -- (mr)
      (tl) -- (ml)
      (ml) -- (bl)
      (tr) -- (mr)
      (mr) -- (br);
\end{tikzpicture}
\end{center}
\end{example}
\begin{definition}
\cjslat is the category of complete join semilattices and homomorphisms.
\end{definition}
The category \cp{\rel} is \cjslat-enriched, via the usual inclusion order on relations. Again,
as \cp{\rel} is isomorphic to \graphcat, we can transfer this enrichment to the homsets of
\graphcat. Concretely this ordering is given as follows:
\begin{definition}
\label{def:posenrichment}
We define a partial order on graphs:
\begin{equation*}
\gamma \leq \gamma' \quad\Leftrightarrow\quad \edges{\gamma} \subseteq \edges{\gamma'}
\end{equation*}
These graphs are closed under unions in the following sense:
\begin{equation*}
\vertices{\bigcup_i \gamma_i} := \bigcup_i \vertices{\gamma_i} \qquad\qquad\qquad \edges{\bigcup_i \gamma_i} := \bigcup_i \edges{\gamma_i}
\end{equation*}
\end{definition}
We can also recover a dagger operation on \graphcat, induced from the isomorphism with \cp{\rel}.
The dagger is given by ``reversing the order of the vertex pairs'', explicitly:
\begin{definition}[Graph Dagger]
\label{def:dagger}
We define identity on objects involution $(-)^\dagger : \graphcat^{op} \rightarrow \graphcat$.
The action on morphisms for graph $\gamma : A \rightarrow B$ is given by:
\begin{align*}
\vertices{\gamma^\dagger} &:= \{ (b,a) \mid (a,b) \in \vertices{\gamma} \}\\
\edges{\gamma^\dagger} &:= \{ \{(b,a),(b',a')\} \mid \{ (a,b), (a',b') \} \in \edges{\gamma} \}
\end{align*}
\end{definition}
Via our isomorphism, we describe a direct, graph based version of the embedding of lemma \ref{lem:embedding},
generalizing proposition \ref{prop:graphmix}:
\begin{proposition}
\label{prop:graphembedding}
There is a faithful, surjective and identity on objects functor:
\begin{equation*}
\rel \rightarrow \graphcat
\end{equation*}
The action on morphisms is to send a relation $R : A \rightarrow B$ to the complete graph on
$R \subseteq A \times B$. 
\end{proposition}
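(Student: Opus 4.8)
The plan is to realize the functor as a composite of two maps already available in the excerpt: the embedding $E : \rel \rightarrow \cp{\rel}$ of Lemma~\ref{lem:embedding}, followed by the isomorphism $G : \cp{\rel} \rightarrow \graphcat$ of Theorem~\ref{theorem:catiso}. Both are identity on objects, so $F := G \circ E$ is automatically an identity-on-objects functor, and functoriality is inherited for free from its being a composite. Moreover $F(R)$ lands in \graphcat by construction, so no separate check is needed that the output is a legitimate graph. It then remains only to compute the action of $F$ on morphisms, confirm that it agrees with the stated description, and verify faithfulness and surjectivity.

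First I would unwind the embedding on a relation $R : A \rightarrow B$. Reading off the string diagram of Lemma~\ref{lem:embedding} under the self-dual structure of Theorem~\ref{thm:selfdual}, where $A^{*} = A$, the functor $E$ sends $R$ to the \cp{\rel} morphism $\widehat{R} : A \otimes A \rightarrow B \otimes B$ given by $\widehat{R}(a_1,a_2,b_1,b_2) \Leftrightarrow R(a_1,b_1) \wedge R(a_2,b_2)$, and a quick check against the axioms of Lemma~\ref{lem:cpmap} confirms that $\widehat R$ really is a \cp{\rel} morphism. Applying Definition~\ref{def:graphfunctor} then gives $\vertices{G(\widehat R)} = \{ (a,b) \mid R(a,b) \} = R$ and $\edges{G(\widehat R)} = \{ \{(a,b),(a',b')\} \mid R(a,b) \wedge R(a',b') \}$. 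The vertex set is thus exactly $R \subseteq A \times B$, and there is an edge between two vertices precisely when both lie in $R$; that is, $F(R)$ is the complete graph on $R$, matching the claim. Specialising to $A = \{*\}$ recovers Proposition~\ref{prop:graphmix}, as the proposition promises.

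For the remaining properties, identity on objects is immediate, as already noted. Faithfulness I would establish by observing that $R$ is recovered from $F(R)$ as its vertex set, so $R \mapsto F(R)$ is injective on each homset; equivalently, $E$ restricted to \rel is faithful and $G$, being an isomorphism, is faithful, whence so is their composite. I read \emph{surjective} here in the same sense as in Lemma~\ref{lem:embedding}, namely surjective on objects, which is immediate from identity on objects; it is worth flagging that $F$ is emphatically \emph{not} full, its image being exactly the complete graphs. The one genuine subtlety is faithfulness: in a general $\dagger$-compact category the analogous embedding identifies morphisms differing by a global scalar and so fails to be faithful, and it is precisely the absence of nontrivial scalars in \rel that makes $F$ faithful here. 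Everything else is routine bookkeeping, since the hard structural work has already been done in Lemma~\ref{lem:embedding} and Theorem~\ref{theorem:catiso}.
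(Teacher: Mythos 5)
Your proposal is correct and takes essentially the same route as the paper: the paper offers no explicit proof, only the remark that the proposition arises ``via our isomorphism'' as a graph-based version of the embedding of Lemma~\ref{lem:embedding}, which is exactly your composite $G \circ E$ with the computation of $E(R)$ and $G(E(R))$ filled in. Your direct faithfulness argument (recovering $R$ as the vertex set of the image graph) and your reading of ``surjective'' as surjectivity on objects are both sound refinements of what the paper leaves implicit.
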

Using proposition \ref{prop:graphembedding}, we can easily transfer symmetric and compact structure to \graphcat. 
Concretely, this is given as follows:
\begin{definition}
\label{def:symcom}
The symmetry $A \otimes B \rightarrow B \otimes A$ is given by
the complete graph on:
\begin{equation*}
\{ (a,b,b,a) \mid a \in A, b \in B \}
\end{equation*}
The cup of the compact structure corresponds to the complete graph on:
\begin{equation*}
\{ (*,a,a) \mid a \in A \}
\end{equation*}
The cap is then given by applying the $\dagger$ graph operation.
Clearly the vertex sets described above are isomorphic to simpler sets. The definitions above will interact
correctly with our definitions of composition and tensor product.
\end{definition}
We can now extend theorem \ref{theorem:catiso} to account for the order-enriched $\dagger$-compact structure:
\begin{theorem}
Consider \graphcat as a \cjslat-enriched $\dagger$-compact closed category, with the required structure
as in definitions \ref{def:monoidal}, \ref{def:posenrichment}, \ref{def:dagger} and \ref{def:symcom}.
The categories \cp{\rel} and \graphcat are isomorphic as \cjslat-enriched $\dagger$-compact categories.
\end{theorem}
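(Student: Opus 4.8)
The plan is to leverage Theorem \ref{theorem:catiso}, which already establishes that $G$ and $C$ are mutually inverse, identity-on-objects functors, hence an isomorphism of plain categories. To upgrade this to an isomorphism of \cjslat-enriched $\dagger$-compact closed categories it then suffices to verify that a single one of the two functors --- I would work with $G$ --- strictly preserves each component of the structure. Since the structure on \graphcat was given by explicit formulas (Definitions \ref{def:monoidal}, \ref{def:posenrichment}, \ref{def:dagger} and \ref{def:symcom}), each verification reduces to unwinding the corresponding \cp{\rel} operation into an explicit relation, applying $G$, and comparing the resulting vertex and edge sets with the stated formula.

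First I would handle the \cjslat-enrichment. The edges of $G(R)$ record precisely the membership condition $R(a,a',b,b')$, and the vertices are the induced self-loops, so by the axioms of Lemma \ref{lem:cpmap} the relation $R$ and the graph $G(R)$ carry the same data. Consequently $R \subseteq S$ holds iff $\edges{G(R)} \subseteq \edges{G(S)}$, and $G$ commutes with arbitrary unions, matching the edge-inclusion order and the join of Definition \ref{def:posenrichment}. The dagger is equally direct: the \cp{\rel} dagger is the relational converse $R^\dagger(b,b',a,a') \Leftrightarrow R(a,a',b,b')$, so computing $\vertices{G(R^\dagger)}$ and $\edges{G(R^\dagger)}$ immediately reproduces the vertex- and edge-reversal of Definition \ref{def:dagger}.

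For the symmetric and compact structure I would invoke Proposition \ref{prop:graphembedding} (equivalently Proposition \ref{prop:graphmix}): the symmetry, cup and cap of \cp{\rel} all arise, via the embedding of Lemma \ref{lem:embedding}, from the corresponding maps in \rel, hence are pure, hence are sent by $G$ to complete graphs. It then only remains to read off the underlying relation of each structure map and confirm that its diagonal yields the vertex set claimed in Definition \ref{def:symcom}, after which completeness fixes the edges automatically.

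The main obstacle is the tensor product. Here I would unwind the wire-crossing composite defining $\otimes$ in Theorem \ref{thm:selfdual} into the explicit relation $(f \otimes g)\bigl((a,b),(a',b'),(c,d),(c',d')\bigr) \Leftrightarrow f(a,a',c,c') \wedge g(b,b',d,d')$, taking care with the index bookkeeping introduced by the crossings, since this is exactly where the self-dual rearrangement of the dual wires must be tracked correctly. Applying $G$ and restricting to self-loops then gives $\vertices{G(f \otimes g)} = \{((a,b),(c,d)) \mid (a,c) \in \vertices{G(f)} \wedge (b,d) \in \vertices{G(g)}\}$, with the analogous computation for edges, matching Definition \ref{def:monoidal} under the evident reassociation of the cartesian product; strict preservation of the unit, associators and unitors is then immediate, as these are inherited unchanged from \rel on both sides. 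Assembling these verifications shows that $G$ is an isomorphism of \cjslat-enriched $\dagger$-compact closed categories.
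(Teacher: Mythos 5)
Your proposal is correct and coincides in substance with the paper's (largely implicit) argument: the paper gives no explicit proof, because the structure on \graphcat in Definitions \ref{def:monoidal}, \ref{def:posenrichment}, \ref{def:dagger} and \ref{def:symcom} is presented as the transport of the \cp{\rel} structure along the isomorphism of Theorem \ref{theorem:catiso}, so the theorem holds essentially by construction. Your component-by-component check that $G$ strictly preserves the enrichment, the dagger, the tensor (via unwinding the self-dual tensor of Theorem \ref{thm:selfdual}) and the symmetric/compact structure (via the embedding of Proposition \ref{prop:graphembedding}) is exactly the verification that those explicit formulas really are the transported structure, i.e.\ the content the paper leaves to the reader.
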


\section{Conclusion}
From the observations in this paper, we can view \cp{\rel} as a $\dagger$-compact monoidal category with
a highly combinatorial flavour. 

We already encounter some strange behaviour in \cp{\rel}, such as
the unusual mixing behaviour noted in lemma \ref{lem:mixing}.
One direction for further investigation is to explore the CPM construction for other more general categories
of relations. Every regular category has a corresponding category of relations, giving a broad
class of examples. Another generalized class of categories of relations is given by matrix algebras over
commutative quantales. Study of these examples may clarify the intuitive interpretation of the CPM construction
as introducing mixing, and furnish a broad class of models for string diagrammatic calculi.

A second area of interest for further investigation is iteration of the CPM construction. This is of practical
interest in linguistics where multiple aspects of mixing occur, such as ambiguity and hyponym / hypernym relations.
Here again, categories of relations might prove useful test cases for our intuitions as to how ``multiple levels
of mixing'' should behave.

\subsection*{Acknowledgements}
I would like to thank Martha Lewis, Bob Coecke, Stefano Gogioso and Peter Selinger for discussions
and helpful feedback. 

\bibliography{cpstates}
\bibliographystyle{eptcs}

\end{document}